\title{Model-Checking Process Equivalences}
\author{Martin Lange \qquad Etienne Lozes \qquad\quad Manuel Vargas Guzm\'an  
\institute{School of Electrical Engineering and Computer Science \\
University of Kassel, Germany}
\email{\quad martin.lange@uni-kassel.de \qquad\qquad lozes@lsv.ens-cachan.fr
\qquad\qquad manuel.vargas@uni-kassel.de \quad}
}
\begin{document}
\maketitle

\begin{abstract}
Process equivalences are formal methods that relate programs and system which, informally, 
behave in the same way. Since there is no unique notion of what it means for two dynamic
systems to display the same behaviour there are a multitude of formal process equivalences,
ranging from bisimulation to trace equivalence, categorised in the linear-time branching-time
spectrum. 

We present a logical framework based on an expressive modal fixpoint logic which is capable
of defining many process equivalence relations: for each such equivalence there is a fixed
formula which is satisfied by a pair of processes if and only if they are equivalent with
respect to this relation. We explain how to do model checking, even symbolically, for a 
significant fragment of this logic that captures many process equivalences. This allows 
model checking technology to be used for process equivalence checking. We show how partial
evaluation can be used to obtain decision procedures for process equivalences from the generic
model checking scheme.
\end{abstract}

\section{Introduction}
In concurrency theory, a process equivalence is an equivalence relation between processes
--- represented as states of a labeled transition system (LTS) --- that aims at capturing the informal
notion of ``having the same behaviour''. A theory of behavioural equivalence obviously has applications
in formal systems design because it explains which programs or modules can be replaced by others without
changing the system's behaviour. 

There is no single mathematical notion of process equivalence as an equivalence relation on LTS. Instead
a multitude of different relations has been studied with respect to their pragmatics, axiomatisability,
computational complexity, etc.
These form a hierarchy with respect to containment, known as the 
\emph{linear-time branching-time spectrum} \cite{vG01}. We refer to the literature for a comprehensive overview over
all these equivalence relations at this point.


There are a few techniques which have proved to yield decision procedures for certain process equivalences, for example 
\emph{approximations} \cite{IC::KanellakisS1990}, \emph{characteristic formulas} \cite{IB-B941065,conf/icalp/CleavelandS91} 
and \emph{characteristic games} \cite{Stirling01,conf/cav/ShuklaHR96}. Often, for each equivalence notion, the same questions 
are being considered independently of each other, like ``can the algorithm be made to work with symbolic (BDD-based) 
representations of LTS?'', and the answer may depend on the technique being used to obtain the algorithm.

In this paper we introduce a further and generic, thus powerful technique, using the notion of \emph{defining formulas}. 
We present a modal fixpoint logic which is expressive enough to define these equivalences in the sense that, for an 
equivalence relation $R$, there is a fixed formula $\Phi_R$ which evaluates to true in a pair of processes if and only if 
they are related by $R$. We also give a model checking algorithm for this logic. This can then be instantiated with such 
formulas $\Phi_R$ in order to obtain an equivalence checking algorithm for $R$. Furthermore, the model checking algorithm 
can easily deal with symbolic representations. Thus, this yields BDD-based equivalence checking algorithms for all the process 
equivalences mentioned in this paper. Moreover, with this generic framework, the task of \emph{designing} an equivalence 
checking algorithm for any new equivalence notion boils down to simply \emph{defining} this relation in the modal fixpoint 
logic presented here.

This is related to work on \emph{characteristic formulas}, yet it is different. There, in order to
check two processes $P$ and $Q$ for, say, bisimilarity, one builds the characteristic formula
$\Phi_\sim^P$ describing all processes that are bisimilar to $P$ and checks whether or not $Q \models
\Phi_\sim^P$ holds. Here, we take a fixed formula $\Phi_\sim$ and check whether or not $(P,Q) \models
\Phi_\sim$ holds. Note that the former cannot be made to work with a symbolic representation of $P$
whereas the latter can. In general, using defining instead of characteristic formulas has the advantage
of lifting more model checking technology to process equivalence checking. 


The use of fixed formulas expressing process equivalences is being made possible by the design of a new
modal fixpoint logic. It is obtained as the merger between two extensions of the modal $\mu$-calculus, namely
the \emph{higher-dimensional $\mu$-calculus} $\mathcal{L}_\mu^\omega$ \cite{Otto99} and the
\emph{higher-order $\mu$-calculus} HFL \cite{ViswanathanV04}. The former allows formulas to make
assertions about tuples of states rather than states alone. This is clearly useful in this setup given
that process equivalences are binary relations. Not surprisingly, it is known for instance that there
is a formula in $\mathcal{L}^2_\mu$ --- the fragment speaking about tuples of length 2 ---
that expresses bisimilarity. On the other hand, HFL's higher-order features allow the logic to express
properties that are more difficult than being polynomial-time decidable. It is known for instance that it
can make assertions of the kind ``for every finite word $w$ there is a path labeled with $w$'' which is
very useful for describing variants of trace equivalence.

The rest of the paper is organised as follows. Sect.~\ref{sec:proceq} recalls the linear-time
branching-time hierarchy. For the sake of completeness, the exact definitions of these relations are
presented in an appendix. Sect.~\ref{sec:hohd} defines the aforementioned modal fixpoint logic.
Sect.~\ref{sec:procequivformula} realises the reduction from process equivalence checking to model
checking fixed formulas by simply spelling out the definition of those equivalence notions in this
modal fixpoint logic. Sect.~\ref{sec:mc} shows how to do model checking for the fragment of
this logic which is most significant to process equivalence checking, and how the na\"{\i}ve model
checking algorithm can be optimised using need-driven function evaluation and partial evaluation.
Sect.~\ref{sec:concl} concludes with ideas on further work in this direction.


\section{Process Equivalences}
\label{sec:proceq}
In this section we present the hierarchy of the linear-time branching-time spectrum, as it
can be seen from Fig.~\ref{fig:ltbth}, the greatest equivalence is \emph{finite trace 
equivalence}, and the finest one is \emph{bisimulation}. First we introduce some 
preliminaries and notation.
We use letters $a,b,\ldots$ to denote actions, and letter $t$ to denote a trace. Letters $P,Q,\ldots$ denote processes.

A \textit{labeled transition system} (LTS) over a set of actions\footnote{For
  simplicity, we do not consider state labels.} $\Act = \{a,b,...\}$  is a triple $(\Procs,\Act,\tr{})$, where $\Procs$ is a
set of states representing processes, $\Act$ is the set of actions, and $\tr{} \; \subseteq \Procs
\times \Act \times \Procs$ is a transition relation.  We write $P \tr{a} Q$ for $(P,a,Q) \in \tr{}$.
$I(P) := \{a \in \Act \mid \exists Q. P \tr{a} Q \}$ denotes the set of \emph{initial actions} of a
process $P$.

A \emph{finite trace} $\:t \in \Act^*$ of $P_0$, is a finite sequence of actions $a_1 ... a_n$ s.t. there are $P_0 ... P_n$ with 
$P_{i-1} \tr{a_i} P_n$ for all $i = 1, ..., n$. We write $P \tr{t} Q$ if there is a trace $t$ of $P$ that ends
in $Q$.

Since the main purpose of this paper is not to focus deeply on the semantics of process equivalences, we do not address 
the definitions in this section. For further details, the reader can find the exact definitions of all process equivalences in Appendix ~\ref{sec:app}.

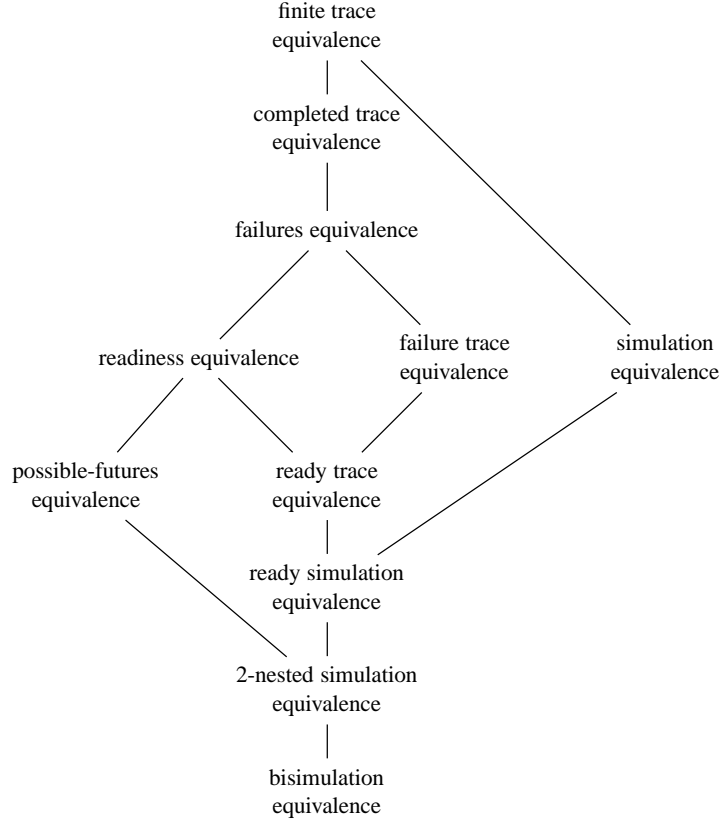
\begin{figure}[t]
 \begin{center}
  \scalebox{0.8}{\begin{tikzpicture}[shorten >=1pt,auto,text width=9em,text badly centered,node distance=3cm,semithick]

  \node (A)                                  {finite trace equivalence};
  \node (B) [node distance=1.7cm,below of=A] {completed trace equivalence};
  \node (C) [node distance=1.7cm,below of=B] {failures equivalence};
  \node (D) [below left of=C]                {readiness equivalence};
  \node (E) [below right of=C]               {failure trace equivalence};
  \node (F) [below left of=E]                {ready trace equivalence};
  \node (G) [node distance=1.7cm,below of=F] {ready simulation equivalence};
  \node (H) [node distance=1.7cm,below of=G] {2-nested simulation equivalence};
  \node (I) [node distance=1.7cm,below of=H] {bisimulation equivalence};
  \node (J) [node distance=3.5cm,right of=E] {simulation equivalence};
  \node (K) [node distance=4cm,left of=F]    {possible-futures equivalence};
  
  \path (A) edge node {} (B)
            edge node {} (J)
        (B) edge node {} (C)
        (C) edge node {} (D)
        (C) edge node {} (E)
        (D) edge node {} (F)
            edge node {} (K)
        (E) edge node {} (F)
        (F) edge node {} (G)
        (G) edge node {} (H)
        (H) edge node {} (I)
        (J) edge node {} (G)
        (K) edge node {} (H);
  \end{tikzpicture}}
 \end{center}
\caption{\label{fig:ltbth}The linear-time branching-time hierarchy.}
\end{figure}

\section{A Higher-Order Higher-Dimensional $\mu$-Calculus}
\label{sec:hohd}
\subsection{Combining Higher-Order and Higher-Dimensionality}

In this section, we introduce a logical formalism, called $\HOHDMU$ that extends the standard 
modal $\mu$-calculus. It can be seen as the combination of two 
extensions of the $\mu$-calculus that were 
previously defined: the higher-order fixpoint logic HFL~\cite{ViswanathanV04}, 
and the higher-dimensional modal mu-calculus 
$\mathcal{L}_\mu^\omega$~\cite{Otto99}. First we build some intuition about the use of higher-order
and higher-dimensional features in modal logics.

In HFL, formulas may denote
not only sets of processes, but also \emph{predicate transformers}, 
\emph{i.e.} functions from sets of processes to
sets of processes, and more generally any higher-order functions of some
functional type built on top
of the basic type $\PR$ of set of processes.
For instance, the formula 
$$
\lambda x:\PR.\ \may{a}{}x \;\wedge\; \must{b}{}\bot
$$ 
denotes the function that 
takes a predicate $\form$ of type $\PR$, \emph{i.e.} a set of processes, 
and returns the predicate $\may{a}{}\form \;\wedge\; \must{b}{}\bot$,
\emph{i.e} the set of processes 
$P$ for which $P\stackrel{b}{\not\rightarrow}$ and $P\tr{a}P'$ for some $P'\models \form$.
Similarly, the formula 
$$
\lambda f: \PR\to\PR.\ \lambda x:\PR.\ f\ (f\ x)
$$
denotes the function that maps any predicate transformer $f$ 
to the predicate transformer $f^2$.

Like in the standard $\mu$-calculus, to every monotone function of type $\PR\to\PR$ denoted by a
formula $\lambda x:\PR.\ \form$, HFL associates a least fixed point $\mu x:\PR.\ \form$. In HFL, this
construction generalises well to any monotone function of type $\typ\to\typ$, thanks to a construction
based on the pointwise ordering of functions we recall below. For instance, the formula $\mu
f:\PR\to\PR.\ f$ denotes the constant function $\lambda x.\;\bot$, since it is the smallest predicate
transformer, according to the pointwise ordering, that is fixed by the identity function.  A bit more
elaborated, the formula
$$
\mu f:\PR\to\PR.\ \lambda x:\PR.\lambda y:\PR.\;\; (x\wedge y)\;\; \vee\;\; f\ \may{a}{}x\ \may{b}{}y
$$
can be unfolded as 
$$
f\ x\ y\quad=\quad (x\wedge y)\;\;\vee\;\; f\ \may{a}{}x\ \may{b}{}y \quad = \quad
(x\wedge y)\;\;\vee\;\; (\may{a}{}x\wedge \may{b}{}y)\;\;\vee\;\; f\ \may{aa}{}x\ \may{bb}{}y\quad =\quad \dots
$$
and thus denotes the function  
$\lambda x,y.\ \bigvee_{n\geq 0}\;\may{a}{}^nx\;\wedge\;\may{b}{}^ny$.

The higher-dimensional $\mu$-calculus extends the $\mu$-calculus in a different way. In $\mathcal{L}_{\mu}^{\omega}$, 
logical formulas do not denote sets of processes,
but sets of tuples of processes. The $i$-th component of a tuple
can be changed by the $i$-th modality $\may{a}{i}$.
For instance, the 2-dimensional formula $\may{a}{1}\top\;\wedge\;\may{b}{2}\top$ 
denotes the set of
pairs $(P,Q)$ such that $P\tr{a}P'$ and $Q\tr{b} Q'$ for some $P',Q'$. 
The modality $\may{a}{i}$ only modifies the $i$-th component of the tuple,
and leaves all other components unchanged, which validates some rules like
$$
\begin{array}{l@{\qquad\qquad}r@{\quad\Leftrightarrow\quad}l@{\qquad}l}
\mbox{(commutation)}& \may{a}{1}\may{b}{2}\;\form & \may{b}{2}\may{a}{1}\;\form
\\
\mbox{(scope extrusion)} & \may{a}{d}\;(\form\;\wedge\;\formbis) & \may{a}{d}\;\form\;\;\wedge\;\; \formbis & \big(\mathsf{dim}(\formbis)<d\big).
\end{array}
$$

We associate a type
$\iPR{d}$ to the formulas of the $d$-dimensional $\mu$-calculus. 
Note that there is a significant difference
between \emph{e.g.} $\iPR{2}$ and the product type 
$\PR{}\times\PR$: the former is the type of binary predicates over processes,
whereas the latter is the type of pairs of unary predicates. 
There is indeed no obvious way of representing $\mathcal{L}_{\mu}^{\omega}$ in
HFL, although HFL may encode some of product types using standard techniques.


\subsection{Syntax and Semantics}

Let $\Act$ be as above. Fix $d \in \mathbb{N}$.
We assume an infinite set $\Var = \{x,y,z,\ldots\}$ of variables. 
A formula is a 
$\form$ that 
can be derived from by 
$$
\begin{array}{rl@{\qquad}r}
\form,\formbis \enspace ::= 
& \top\mid\may{a}{i}\form\mid\neg\form\mid\form\wedge\formbis\mid
x\mid\lambda x^{v}:\typ.\;\form \mid \mu x:\typ.\form \mid \form\ \formbis
&
\mbox{(formulas)}
\\[.8em]
v \enspace ::= &
\mon\mid\comon\mid\bimon 
& \mbox{(variances)}
\\[.8em]
\typ,\typbis \enspace ::= 
& \iPR{d}\mid\typ^v\to\typbis
& \mbox{(types)}
\end{array}
$$
where $1 \le i \le d$, $a \in \Act$ and $x \in \Var$.

The typing arrow $\to$ is --- as usual --- right-associative. Thus, every type is of the form 
$\typ = \typ_1^{v_1} \to \ldots \to \typ_m^{v_m} \to \iPR{d}$ for some $m \ge 0$. For such normalised
types we can define their \emph{order} simply as $\ord{\typ} := \max \{ 1 + \ord{\typ_i}: i=1,\ldots,m \}$ with the convention
of $\max \emptyset = 0$.

Formulas are ruled by the type system depicted on Fig.~\ref{fig:type-system}.
Intuitively, the aim of the type system is to prevent applications of non-functions
to formulas, as well as fixpoint definitions of non-monotone functions, like $\mu x.\lambda y. \neg x\ y$.
In order to exclude the latter, \emph{variances} are introduced for each function parameter.

For $d \ge 1$ and $o \ge 0$, let $\hohdmu{o}{d}$ consist of all closed formulas $\Phi$ such that the statement 
$\emptyset \vdash \Phi: \iPR{d}$ is typable, and each type annotation in $\Phi$ has order at most $o$.
In general, a statement of the form $\Gamma \vdash \Phi : \typ$ 
asserts that the formula $\Phi$ has type $\typ$ under the assumptions $\Gamma$, which is a list of the
form $x_1^{v_1}: \typ_1, \ldots, x_m^{v_m} : \typ_m$. For such a list of assumptions, $\neg\Gamma$ is 
obtained from $\Gamma$ by swapping the variance of each variable: $\mon$ becomes $\comon$ and vice-versa, 
and $\bimon$ remain the same. 
Thus, $\hohdmu{o}{d}$ consists of all well-typed 
and closed formulas of type that should denote a set of $i$-tuples in an LTS and use at most higher-order
features of order $o$. Let 
\begin{displaymath}
\hohdmu{o}{\omega} \enspace := \enspace \bigcup_{d \ge 1} \hohdmu{o}{d} \quad, \qquad 
\hohdmu{\omega}{d} \enspace := \enspace \bigcup_{o \ge 0} \hohdmu{o}{d} \quad, \qquad
\HOHDMU \enspace := \enspace \bigcup_{o \ge 0}\bigcup_{d \ge 1} \hohdmu{o}{d}
\end{displaymath}

\begin{figure}
\begin{mathpar}
\trule{}{\judg{\Gamma}{\top:\iPR{d}}}

\trule{\judg{\Gamma}{\form:\iPR{d}}\\i\leq d}{\judg{\Gamma}{\may{a}{i}\form:\iPR{d}}}

\trule{\judg{{}\neg(\Gamma)}{\form:\iPR{d}}}{\judg{\Gamma}{\neg\form:\iPR{d}}}

\trule{\judg{\Gamma}{\form:\iPR{d}}\\\judg{\Gamma}{\formbis:\iPR{d}}}
{\judg{\Gamma}{\form\wedge\formbis:\iPR{d}}}

\trule{v\in\{\mon,\bimon\}}{\judg{\Gamma\;,\;x^v:\typ}{x:\typ}}

\trule{\judg{\Gamma,x^v:\typbis}{\form:\typ}}
{\judg{\Gamma}{\lambda x^v:\typbis.\ \form :\typbis^v\to\typ}}

\trule{\judg{\Gamma,x^{\mon}:\typ}{\form:\typ}}
{\judg{\Gamma}{\mu x(y_1,\ldots,y_m):\typ.\ \form :\typ}}

\trule{\judg{\Gamma}{\form:\typbis^{\mon}\to\typ}\\\judg{\Gamma}{\formbis:\typbis}}
{\judg{\Gamma}{\form\ \formbis :\typ}}

\trule{\judg{\Gamma}{\form:\typbis^{\comon}\to\typ}\\\judg{\neg(\Gamma)}{\formbis:\typbis}}
{\judg{\Gamma}{\form\ \formbis :\typ}}

\trule{\judg{\Gamma}{\form:\typbis^{\bimon}\to\typ}\\
\judg{\Gamma}{\formbis:\typbis}\\\judg{\neg(\Gamma)}{\formbis:\typbis}
}
{\judg{\Gamma}{\form\ \formbis :\typ}}
\end{mathpar}

\caption{\label{fig:type-system}The type system of \HOHDMU.}
\end{figure}

Before we can explain the semantics of a formula we need to give the types a semantics too. Let
$\Transsys_i = (\Procs_i,\Act,\tr{}_i)$ for $i=1,\ldots,d$ be LTS. We take them to be fixed and 
simply write $\sem{\tau}$ instead of $\sem{\tau}^{\Transsys_1,\ldots,\Transsys_d}$. The semantics 
of a type is inductively defined as follows. 
\begin{itemize}
\item $\sem{\iPR{d}}$ is the set of all sets of $d$-ary predicates of processes, ordered by inclusion,
  i.e.\ $\sem{\iPR{d}}=(\pset{\Procs_1\times\ldots\times\Procs_d},\leq_{\iPR{d}})$, with $\mathcal S\leq_{\iPR{d}} \mathcal S'$
  if $\mathcal S\subseteq \mathcal S'$.
\item $\sem{\typ^{\mon}\to\typbis}$ is the set of monotone functions from $\sem{\typ}$ to
  $\sem{\typbis}$, ordered by pointwise ordering, i.e.\  $\sem{\typ^{\mon}\to\typbis}=\{f \in
  \sem{\typbis}^{\sem{\typ}}\;:\;\forall x,y.\;x\leq_{\typ}y \Rightarrow f(x)\leq_{\typbis} f(y)\;\}$
  and $f\leq_{\typ^{\mon}\to\typbis} g$ if for all $x\in\sem{\typ}$, $f(x)\leq_{\typbis} g(x)$.
\item similarly $\sem{\typ^{\comon}\to\typbis}$ is the set of co-monotone functions, and
  $\sem{\typ^{\bimon}\to\typbis}$ is the set of all functions
  from $\sem{\typ}$ to $\sem{\typbis}$, ordered by pointwise ordering.
\end{itemize}
All these domains are complete lattice. As a consequence, any function 
$f\in\sem{\typ^{\mon}\to\typ}$ has a least fixpoint according to the Knaster-Tarski
Theorem~\cite{Kna28,Tars55}; we write 
$\LFP{\typ} \; f$ to denote it.

The semantics of a formula $\form$ of type $\typ$ with respect to an environment $\Gamma$, the
underlying LTS $\Transsys_1,\ldots,\Transsys_d$ and an
interpretation $\eta$ of its free variable is an element of $\sem{\typ}$, defined as follows.
Let $\Procs^d := \Procs_1 \times \ldots \times \Procs_d$.
\begin{align*}
\sem{\judg{\Gamma}{\top:\iPR{d}}}_{\eta} \enspace &= \enspace \Procs^d \\
\sem{\judg{\Gamma}{\may{a}{i}\form:\iPR{d}}}_{\eta} \enspace &= \enspace 
\{ (P_1,\ldots,P_d) \in\Procs^d\;:\;\exists P_i'\in\Procs_i.\;P_i\tr{a}P_i'~\mbox{and}~\\
&\qquad\qquad(P_1,\dots,P'_i,\ldots P_d)\in\sem{\judg{\Gamma}{\form:\iPR{d}}}_{\eta} \} \\
\sem{\judg{\Gamma}{\neg\form:\iPR{d}}}_{\eta} \enspace &= \enspace \Procs^d \setminus \sem{\judg{\neg(\Gamma)}{\form:\iPR{d}}}_{\eta} \\
\sem{\judg{\Gamma}{\form\wedge\formbis:\iPR{d}}}_{\eta} \enspace &= \enspace 
\sem{\judg{\Gamma}{\form:\iPR{d}}}_{\eta}~\cap~
\sem{\judg{\Gamma}{\formbis:\iPR{d}}}_{\eta}  \\
\sem{\judg{\Gamma,x^v}{x^v}}_{\eta} \enspace &= \enspace \eta(x) \\
\sem{\judg{\Gamma}{\lambda x^v:\typbis.\;\form\;:\;\typ}}_{\eta} \enspace &= \enspace
f~\mbox{such that for all}~e\in\sem{\typbis},\quad f(e)~=~
\sem{\judg{\Gamma,x^v:\typbis}{\form:\typ}}_{\eta[x\mapsto e]} \\
\sem{\judg{\Gamma}{\mu x:\typ.\;\form\;:\;\typ}}_{\eta} \enspace &= \enspace
\LFP{\typ}~\sem{\judg{\Gamma}{\lambda x^{\mon}:\typ.\;\form\;:\;\typ}}_{\eta} \\
\sem{\judg{\Gamma}{\form\; \formbis:\typ}}_{\eta} \enspace &= \enspace 
f(e),\mbox{ where }
f=\sem{\judg{\Gamma}{\form:\typbis^v\to\typ}}_{\eta} 
\mbox{ and } 
e=\sem{\judg{\Gamma'}{\formbis:\typbis}}_{\eta}
 \\
\end{align*}
If $\judg{}{\form:\typ}$ and $m\in\sem{\typ}$, we write $m\models\form$ to denote that
$m\in\sem{\judg{}{\form:\typ}}$.

We assume standard notations for derived boolean and modal operators, and write $\form\vee\formbis$ for
$\neg(\neg\form\wedge\neg\formbis)$, or $\must{a}{i}\form$ for $\neg\may{a}{i}\neg\form$, or
$\form\nLeftrightarrow\formbis$ for $(\form\wedge\neg\formbis)\vee(\neg\form\wedge\formbis)$, etc.  If
$\judg{\Gamma}{\form:\typ_1^{\mon}\to\typ_2}$ and $\judg{\Gamma}{\formbis:\typ_2^{\mon}\to\typ_3}$ are
two monotone functions, we write $\formbis\circ\form$ as a shorthand for the monotone function $\lambda
x^{\mon}:\typ_1.\;\formbis\;(\form\; x)$. We will also write $\mu x(y_1,\ldots,y_m):\sigma_1^{v_1} \to
\ldots \to \sigma_m^{v_m} \to \tau.\Phi$ instead of $\mu x:\tau.\lambda y_1:\sigma_1^{v_1} \ldots
\lambda y_m:\sigma_m^{v_m}.\Phi$. Finally, $\Phi[\Psi/x]$ is obtained from $\Phi$ by replacing every
free occurrence of the variable $x$ with the formula $\Psi$.


\section{Process Equivalences as Formulas}
\label{sec:procequivformula}
In this section, we show how all process equivalences
of the linear-time branching-time hierarchy can be characterised
by $\HOHDMU$ in a certain sense. To improve readability, we will often keep the type system implicit, 
and use different variable symbols in order to suggest the type. For instance,
we write $X,Y$ to range over sets of tuples of 
processes, $F,G$ to range over first-order
functions of type 
$\iPR{2}^{v_1}\to\dots\to\iPR{2}^{v_m}\to\iPR{2}$, 
whereas $\mathcal F$ ranges over second-order functions.
We write $\swap{\form}$ for the formula $\form$ in
which $\may{a}{1}$ and $\may{a}{2}$ are swapped for any $a \in \Act$, equally for $\must{a}{1}$ and $\must{a}{2}$. For any $t = a_1\ldots a_n \in \Act^*$
we write $\may{t}{i}\Phi$ to abbreviate $\may{a_1}{i}\ldots\may{a_n}{i}\Phi$, and similarly for $\must{t}{i}$.

We say that an equivalence relation 
$\mathcal R$ over processes is \emph{characterised} by a closed 
formula $\form$ of type $\iPR{2}$ if for all processes $P,Q$
$$
P\mathrel{\mathcal R}Q \qquad \Leftrightarrow \qquad (P,Q)\models \form.
$$
We will say that a formula 
$\form$ \emph{tests} for $\mathcal R$ if $\neg\form\wedge\neg\swap{\form}$
characterises $\mathcal R$. Intuitively,
$\form$ tests for $P\mathrel{\mathcal R}Q$ if it is true when
$P$ presents a behavior that $Q$ cannot reproduce. 
For readability, we only present formulas
that test process equivalences, but it is straightforward to 
derive formulas that characterise process equivalence.
We later write $\form_{\mathcal R}$ for a formula that tests $\mathcal R$.

Let us first consider trace equivalence. If we
were to consider a logic with infinite disjunctions,
a formula testing finite trace equivalence would be
$\bigvee_{t\in\Act^*}\may{t}{1}\top\wedge\must{t}{2}\bot$.
Encoding such an infinite disjunction is not easy in general,
and it is indeed impossible in the ordinary $\mu$-calculus.
But the $\HOHDMU$ formula
$$
\form_{\mathsf{t}} \qquad \eqdef \qquad \big(\mu F(X,Y). \quad (X\wedge Y)\quad \vee \bigvee_{a\in \Act} F \quad \may{a}{1}X \quad \must{a}{2}Y\big)\qquad \top \qquad \bot
$$
is equivalent to the one with the infinite disjunction, and thus
tests trace equivalence.

Let us consider now all other equivalences of the lower part of the hierarchy.
As all 
these equivalences are variations around finite trace equivalence,
it can be expected that the formulas testing them are very similar.
We introduce the template formula $\mathsf{TemplateTrace}(\mathsf{Mod},\mathsf{Pred})\eqdef$
$$
\bigvee_{\form\in \mathsf{Pred}}
\Big(\mu F(X,Y).\quad (X\wedge Y)\quad \vee \bigvee_{\Psi\in \mathsf{Mod}} F\quad \Psi \quad \neg \swap{\Psi}[\neg Y/X]\Big)\qquad \form \qquad \neg \form[1\leftrightarrow 2]
$$
for some finite sets $\mathsf{Pred}$ and $\mathsf{Mod}$ of 0-order 
formulas. For instance, the above formula testing
trace equivalence is obtained for $\mathsf{Pred}=\{\top\}$ and
$\mathsf{Mod}=\{\may{a}{1}X \,:\, a\in \Act\}$. Other
instantiations of these two parameters provide all equivalences above
simulations, c.f.\ the upper table in Fig.~\ref{fig:belowsim}. Let 
$\mathsf{fail(A)}\eqdef \bigwedge_{a\in A}\must{a}{1}\bot$ and
$\mathsf{ready}(A)\eqdef \bigwedge_{a\in A}\may{a}{1}\bot\,\wedge\,\bigwedge_{a\not\in A}\must{a}{1}\bot$.

\begin{figure}
\begin{center}
\begin{tabular}{@{\quad}l@{\quad}|@{\quad}c@{\quad}|@{\quad}c@{\quad}}
\rule[-2mm]{0pt}{0mm}equivalence  & $\mathsf{Mod}$ & $\mathsf{Pred}$ \\
\hline\hline
\rule[-2mm]{0pt}{6mm}trace & $\{\,\may{a}{1}X \,:\, a\in \Act\,\}$ 
& $\{\,\top\,\}$
\\ \hline 
\rule[-2mm]{0pt}{6mm}completed trace 
& $\{\,\may{a}{1}X \,:\, a\in \Act\,\}$
& $\{\,\bigwedge_{a\in\Act}\must{a}{1}\bot\,\}$
\\ \hline 
\rule[-2mm]{0pt}{6mm}failure
& $\{\,\may{a}{1}X \,:\, a\in \Act\,\}$
& $\{\,\mathsf{fail}(A)\,:\,A\subseteq \Act\}$
\\ \hline 
\rule[-2mm]{0pt}{6mm}failure trace
& $\{\,\may{a}{1}X \,:\, a\in \Act\,\}\, \cup\, \{\,X\wedge\mathsf{fail}(A):\,A\subseteq \Act\,\}$
& $\{\top\,\}$ 
\\ \hline
\rule[-2mm]{0pt}{6mm}readiness
& $\{\,\may{a}{1}X \,:\, a\in \Act\,\}$
& $\{\,\mathsf{ready}(A)\,:\,A\subseteq \Act\}$
\\ \hline 
\rule[-2mm]{0pt}{6mm}ready trace
& $\{\,\may{a}{1}X \,:\, a\in \Act\,\}\, \cup\, \{\,X\wedge\mathsf{ready}(A) \,:\,A\subseteq \Act\,\}$
& $\{\,\top\,\}$
\end{tabular}
\vskip4mm

\begin{tabular}{@{\quad}l@{\quad}|@{\quad}c@{\quad}|@{\quad}c@{\quad}}
\rule[-2mm]{0pt}{0mm}equivalence  & $\mathsf{Mod}$ & $\mathsf{Test}$ \\
\hline\hline
\rule[-2mm]{0pt}{6mm}simulation 
& $\{\,\may{a}{1}\must{a}{2}X \,:\, a\in \Act\,\}$ 
& $\bot$
\\ \hline 
\rule[-2mm]{0pt}{6mm}completed simulation 
& $\{\,\may{a}{1}\must{a}{2}X \,:\, a\in \Act\,\}$
& $\mathsf{deadlock}_1\nLeftrightarrow\mathsf{deadlock}_2$
\\ \hline 
\rule[-2mm]{0pt}{6mm}ready simulation
& $\{\,\may{a}{1}\must{a}{2}X \,:\, a\in \Act\,\}$
& $\bigvee_{A\subseteq \Act}\mathsf{ready}_1(A)\nLeftrightarrow\mathsf{ready}_2(A)$
\\ \hline 
\rule[-2mm]{0pt}{6mm}2-nested simulation
& $\{\,\may{a}{1}\must{a}{2}X \,:\, a\in \Act\,\}$
& $\swap{\Phi_{\mathsf{s}}}$  
\\ \hline
\rule[-2mm]{0pt}{6mm}bisimulation
& $\{\,\may{a}{1}\must{a}{2}X\, , \, \may{a}{2}\must{a}{1}X \,:\, a\in \Act\,\}$
& $\bot$
\end{tabular}
\end{center}
\caption{\label{fig:belowsim}Instantiations of the parameters for the template formulas.}
\end{figure}

Formulas testing the relations below simulation equivalence can also be derived from a common, but simpler template. 
In these case, no higher-order features are needed. Let $\mathsf{TemplateSim}(\mathsf{Mod},\mathsf{Test}) \eqdef$
\begin{displaymath}
\mu X.\enspace \mathsf{Test}\enspace \vee \enspace\bigvee\limits_{\Psi \in \mathsf{Mod}} \Psi
\end{displaymath} 
where $\mathsf{Test}$ stands for an $\hohdmu{0}{2}$ formula, and $\mathsf{Mod}$ is a finite
set of $\hohdmu{0}{2}$ formulas. The instantiations for the respective equivalence relations are presented in the
lower table of Fig.~\ref{fig:belowsim}. In the case of 2-nested simulation equivalence, $\Phi_{\mathsf{s}}$
stands for the formula that is obtained from this template for simulation equivalence. We define
$\mathsf{deadlock}_i \eqdef \bigwedge_{a \in \Act} \must{a}{i}\bot$. 

The only equivalence that is shown in Fig.~\ref{fig:ltbth} but not dealt with so far is possible-futures
equivalence. It is definable in $\hohdmu{2}{2}$ through
$$
\Big(\mu \mathcal F. \quad \lambda G_1,G_2. \lambda X. \quad G_1\ (G_2\ X) \,\vee\, \bigvee_{a\in\Act} 
\big(\mathcal F \ (\may{a}{1}\circ G_1) \ (\must{a}{2}\circ G_2) \ X\big)\Big)\quad \lambda X.X\quad \lambda X.X \quad \formbis_{\mathsf{t}}
$$
where $\formbis_{\mathsf{t}}=\form_{\mathsf{t}}\vee\swap{\form_{\mathsf{t}}}$ is the negation of the characteristic formula for trace equivalence.
It remains to be seen whether or not it is also definable in $\hohdmu{1}{2}$ like the other equivalences
are.


\section{Model-Checking $\hohdmu{1}{2}$}
\label{sec:mc}
\subsection{From Model Checking to Process Equivalence Checking}

The characterisations of process equivalences by modal fixpoint formulas give a uniform treatment of
the descriptive complexity of such equivalence relations. However, they do not (yet) provide an
algorithmic treatment. The aim of this section is to do so. To this end, we explain how to do model
checking for $\HOHDMU$. In fact, much less suffices already. Remember that the input to a model
checking procedure is a pair consisting of --- typically --- an LTS and a formula.
Higher-dimensionality of the underlying logic means that the input is a pair consisting of a tuple of
LTS on one side and a formula on the other.  Now any algorithm that does model checking for a pair of
LTS and any formula $\Phi_{\mathcal{R}}$ given in the previous section is in fact an algorithm that
decides the process equivalence $\mathcal{R}$.  Thus, for these purposes it suffices to explain how to
do model checking for any fragment that encompasses the formulas given there.

Here we restrict our attention to the fragment $\hohdmu{1}{2}$. This captures all process equivalences
considered here apart from possible-futures equivalence, because all their characteristic formulas are
naturally of dimension 2 --- they describe a binary relation --- and are of order 1. The extension to
higher dimensionality is straight-forward. The extension to higher orders is also possible but not done 
here for ease of presentation.

\subsection{A Symbolic Model-Checking Algorithm}

We give a model checking algorithm for $\hohdmu{1}{2}$ that can be seen as a suitable extension of the
usual fixpoint iteration algorithm for the modal $\mu$-calculus. It merges the ideas used in model
checking for the higher-dimension $\mu$-calculus \cite{LL-FICS12} and for higher-order fixpoint logic 
\cite{als-mchfl07,al-lpar07}. 

Let $\Phi$ be a well-typed formula of $\hohdmu{1}{2}$. Then each of its subformulas has a type of the
form $\iPR{2}^{v_1} \to \ldots \to \iPR{2}^{v_m} \to \iPR{2}$ for some $m \ge 0$. Algorithm~\ref{alg:mc} 
takes as input two LTS $\Transsys_i = (\Procs_i,\Act,\rightarrow_i)$ for $i \in \{1,2\}$ and an 
$\hohdmu{1}{2}$ formula $\Phi$, and returns the set of all pairs of processes from these two LTS that 
satisfy $\Phi$. 
Model checking is done
by simply computing the semantics of each such subformula on the two underlying LTS. 

\algblockdefx{Cases}{EndCases}%
   [1]{\textbf{case} #1 \textbf{of}}%
   {\textbf{end case}}
\algcblockx[Cases]{Cases}{Case}{EndCases}%
   [1]{#1:\enspace}%
   {\textbf{end case}}

\begin{algorithm}[t]
\caption{Model Checking $\hohdmu{1}{2}$}
\label{alg:mc}
\begin{algorithmic}[1]
\Procedure{\textsc{MC}}{$\Phi,\rho$} \Comment{assume $\Transsys_i = (\Procs_i,\Act,\rightarrow_i)$ to be fixed for $i=1,2$}
\Cases{$\Phi$}
\Case{$\top$} \hspace*{9.7mm}\textbf{return} $\Procs_1 \times \Procs_2$
\Case{$x$} \hspace*{11mm}\textbf{return} $\rho(x)$ \Comment{some variable of type $\iPR{2}^{v_1} \to \ldots \to \iPR{2}^{v_m} \to \iPR{2}$}
\Case{$\neg\Psi$} \hspace*{7.3mm}\textbf{return} $(\Procs_1 \times \Procs_2)\ \setminus\ $ \textsc{MC}($\Psi,\rho$)
\Case{$\Psi_1 \wedge \Psi_2$} \textbf{return} \textsc{MC}($\Psi_1,\rho$) $\cap$ \textsc{MC}($\Psi_2,\rho$)
\Case{$\may{a}{1}\Psi$} \hspace*{3.5mm}\textbf{return} $\{ (P_1,P_2) \mid \exists P' \in \Procs_1$ s.t.\ $P_1 \tr{a}_1 P'$ and $(P',P_2) \in$ \textsc{MC}($\Psi,\rho$) $\}$
\Case{$\may{a}{2}\Psi$} \hspace*{3.5mm}\textbf{return} $\{ (P_1,P_2) \mid \exists P' \in \Procs_2$ s.t.\ $P_2 \tr{a}_2 P'$ and $(P_1,P') \in$ \textsc{MC}($\Psi,\rho$) $\}$
\Case{$\lambda x_1,\ldots,x_m:\iPR{2}^{v_1} \to\ldots\to\iPR{2}^{v_m} \to \iPR{2}$}
  \ForAll{$(T_1,\ldots,T_m) \in (2^{\Procs_1 \times \Procs_2})^m$}
    \State $F(T_1,\ldots,T_m) \gets$ \textsc{MC}($\Psi,\eta[x_1 \mapsto T_1,\ldots,x_m \mapsto T_m]$)
  \EndFor
  \State \textbf{return} $F$
\Case{$\Psi\ \Psi_1\ \ldots\ \Psi_m$}
  \State \textbf{return} \textsc{MC}($\Psi,\rho$)(\textsc{MC}($\Psi_1,\rho$),\ldots,\textsc{MC}($\Psi_m,\rho$)) 
\Case{$\mu x:\iPR{2}^{v_1} \to \ldots \to \iPR{2}^{v_m} \to \iPR{2}.\Psi$}
  \ForAll{$(T_1,\ldots,T_m) \in (2^{\Procs_1 \times \Procs_2})^m$}
    \State $F(T_1,\ldots,T_m) \gets \emptyset$
  \EndFor
  \Repeat
    \State $F' \gets F$
    \ForAll{$(T_1,\ldots,T_m) \in (2^{\Procs_1 \times \Procs_2})^m$}
      \State $F(T_1,\ldots,T_m) \gets$ \textsc{MC}($\Psi,\rho[x \mapsto F']$)
    \EndFor
  \Until{$F = F'$}
  \State \textbf{return} $F$
\EndCases
\EndProcedure
\end{algorithmic}
\end{algorithm}

The difference to model checking the modal $\mu$-calculus is the handling of higher-order subformulas.
Note that the semantics of a function of type $\iPR{2}^{v_1} \to \ldots \to \iPR{2}^{v_m} \to \iPR{2}$
over a pair of LTS with $n_1$, respectively $n_2$ many processes can be represented as a table with
$(2^{n_1 \cdot n_2})^m$ many entries --- one for each possible combination of argument values to this
function. Algorithm \textsc{MC} is designed to compute such a table for the corresponding subformulas.

\begin{theorem}
Let $\Phi$ be a closed $\hohdmu{1}{2}$ formula of size $k$, and $\Transsys_1,\Transsys_2$ be two finite LTS,
each of size $n$ at most. The call of \textsc{MC}($\Phi,[]$) correctly computes 
$\sem{\judg{\emptyset}{\Phi: \iPR{2}}}$ with respect to $\Transsys_1,\Transsys_2$ in time 
$\mathcal{O}(n^2 \cdot 2^{n^2 k^2})$.
\end{theorem}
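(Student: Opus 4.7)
The plan is to prove the theorem in two parts: a correctness claim and a running-time bound. Correctness follows by structural induction on $\Phi$, showing that each case of \textsc{MC} faithfully realises the corresponding semantic clause. The running time follows from a recursive cost analysis that exploits the restriction to order $1$: all intermediate values are either sets in $\sem{\iPR{2}}$ or tables of such sets, and both have easily described finite cardinalities.

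For correctness, I would proceed by induction on $\Phi$. The cases $\top$, $x$, $\neg\Psi$, $\Psi_1\wedge\Psi_2$ and $\may{a}{i}\Psi$ copy the semantic definitions verbatim, so the induction hypothesis hands them to us directly. For $\lambda x_1,\dots,x_m.\Psi$, the algorithm enumerates all tuples in $(2^{\Procs_1\times\Procs_2})^m = \sem{\iPR{2}}^m$; this is legitimate because $\Transsys_1,\Transsys_2$ are finite, and the resulting table is precisely the graph of the denoted function. For the application case the induction gives a function and a family of arguments whose composition is, by definition, the semantics of the application. The only non-trivial case is $\mu x : \tau.\Psi$. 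Here $\tau$ has order at most $1$, so $\sem{\tau}$ is a finite complete lattice under pointwise ordering. By the Knaster--Tarski theorem, the least fixed point of the monotone operator $F \mapsto \sem{\Psi}_{\rho[x\mapsto F]}$ is reached by iterating from the bottom element (the constant-$\emptyset$ table) and stabilises after finitely many steps; the \textbf{repeat}-\textbf{until} loop computes exactly this iteration, so termination follows and the returned table is the desired least fixed point.

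For the complexity bound let $n = \max(|\Procs_1|,|\Procs_2|)$ and $k = |\Phi|$. Every subformula of $\Phi$ has a type of the form $\iPR{2}^{v_1}\to\cdots\to\iPR{2}^{v_m}\to\iPR{2}$ with $m\le k$. The domain $\sem{\iPR{2}}$ has $2^{n^2}$ elements, so a single table has at most $(2^{n^2})^m \le 2^{n^2 k}$ entries, each a subset of $\Procs_1\times\Procs_2$ storable in $\mathcal O(n^2)$ space. The height of the pointwise-ordered function lattice is $(2^{n^2})^m\cdot 2^{n^2} \le 2^{n^2(k+1)}$, hence each fixpoint iteration stabilises after at most that many rounds. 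Letting $T(\Psi)$ denote the cost of \textsc{MC} on $\Psi$, the recurrence for the $\mu$-case is $T(\mu x.\Psi)\le 2^{n^2(k+1)}\cdot 2^{n^2 k}\cdot T(\Psi)$, that is, a multiplicative $2^{\mathcal O(n^2 k)}$ blow-up per fixpoint level; the other cases contribute at most a polynomial factor in $n$ and a $2^{n^2 k}$ factor for the $\lambda$-case. Unwinding the recursion over at most $k$ nested such subformulas yields $T(\Phi) = \mathcal O(n^2)\cdot 2^{\mathcal O(n^2 k)\cdot k} = \mathcal O(n^2\cdot 2^{n^2 k^2})$, as claimed.

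The main obstacle is getting the exponent exactly $n^2k^2$ rather than something worse. This requires observing that the three independent sources of exponential blow-up in a single fixpoint subformula --- table width ($(2^{n^2})^m$), lattice height ($2^{n^2(m+1)}$), and argument multiplicity --- can all be bounded by $2^{n^2 k}$ uniformly, and that these factors compose additively in the exponent across the at most $k$ nesting levels of subformulas of $\Phi$. Everything else --- correctness of the modalities over the two separate transition relations, correctness of function tabulation, and the Knaster--Tarski justification of fixpoint iteration --- is routine given the earlier typing and semantic definitions.
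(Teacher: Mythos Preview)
Your proposal is correct and follows essentially the same approach as the paper's proof: structural induction for correctness (with the stronger invariant over arbitrary environments, which you use implicitly), Knaster--Tarski plus finiteness of the order-$1$ lattice for the fixpoint case, and a per-fixpoint multiplicative blow-up of $2^{\mathcal O(n^2k)}$ compounded over at most $k$ nesting levels for the time bound. The only cosmetic difference is that the paper bounds the number of fixpoint iterations via the observation that each of the $2^{n^2k}$ table entries can increase at most $n^2$ times (monotone chains in $2^{\Procs_1\times\Procs_2}$), whereas you use the coarser cardinality bound $2^{n^2(k+1)}$ on the function lattice; both are $2^{\mathcal O(n^2k)}$ and lead to the same final estimate.
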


\begin{proof}
(Sketch) Correctness is established through a straight-forward induction on the structure of $\Phi$. Note
that the theorem is too weak to be used as an inductive invariant. Instead, one can easily
prove the following stronger assertion: for any provable statement $\judg{\Gamma}{\Psi:\tau}$ and any 
interpretation $\eta$, \textsc{MC}($\Psi,\eta$) computes $\sem{\judg{\Gamma}{\Psi:\tau}}_{\eta}$. For
most cases this follows immediately from the definition of the semantics and the induction hypothesis.
For fixpoint formulas it also uses the well-known characterisation of least fixpoints by their chain
of approximants. Note that the underlying power lattice is finite, even for higher-order types. Thus, fixpoint 
iteration from below --- as done in algorithm \textsc{MC} --- converges to the least fixpoint of the
corresponding function in a finite number of steps.

The upper bound on the worst-case running time is established as follows. Note that $k$ is an upper bound
on the arity of each subformulas first-order type, i.e.\ in $\iPR{2}^{v_1} \to \ldots \to \iPR{2}^{v_m} \to \iPR{2}$
we have $m \le k$. Clearly, the running time for each case-clause is dominated by the one for fixpoint
formulas which --- disregarding recursive calls --- can be done in time $\mathcal{O}(n^2 \cdot 2^{n^2 k})$. 
Note that it needs to fill a table with $2^{n^2 k}$ many entries using fixpoint iteration. Each table entry
can change at most $n^2$ many times due to monotonicity. Furthermore, note that it is not the case that the
semantics of each subformula is only computed once. Because of nested fixpoint formulas, we obtain an
additional exponent which is bounded by the number of fixpoint formulas, i.e.\ also bounded by $k$, resulting
in an upper bound of $\mathcal{O}(n^2 \cdot 2^{n^2 k^2})$.
\end{proof}

This establishes exponential-time upper bounds for all the process equivalence relations which can be
defined in $\hohdmu{1}{2}$. 

\begin{corollary}
Trace, completed trace, failure, failure trace, readiness and ready trace equivalence can be
checked in time $2^{\mathcal{O}(n^2)}$.
\end{corollary}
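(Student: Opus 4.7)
The plan is to derive the corollary directly from the theorem by instantiating it with the concrete formulas exhibited in Section~\ref{sec:procequivformula}. For each relation $\mathcal{R}$ in the list, I would first recall the defining formula $\Phi_{\mathcal{R}}$, obtained as $\neg\mathsf{TemplateTrace}(\mathsf{Mod},\mathsf{Pred})\wedge\neg\swap{\mathsf{TemplateTrace}(\mathsf{Mod},\mathsf{Pred})}$ for the appropriate choice of $\mathsf{Mod}$ and $\mathsf{Pred}$ from the upper table of Fig.~\ref{fig:belowsim}. The key observation to spell out is that each such $\Phi_{\mathcal{R}}$ lies in $\hohdmu{1}{2}$: the only binder is the fixpoint $\mu F(X,Y).\ldots$, whose type $\iPR{2}^{\mon}\to\iPR{2}^{\mon}\to\iPR{2}$ is of order $1$, and everything else is built from boolean connectives and the modalities $\may{a}{1},\must{a}{2}$.

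Next I would bound the size of $\Phi_{\mathcal{R}}$. The finite disjunctions and conjunctions in $\mathsf{Mod}$ and $\mathsf{Pred}$ range over $\Act$ or over $\pset{\Act}$, so the size $k$ of $\Phi_{\mathcal{R}}$ is a function of $|\Act|$ alone --- in the worst case $k = 2^{\mathcal{O}(|\Act|)}$ for the formulas using $\mathsf{fail}(A)$ or $\mathsf{ready}(A)$, but crucially $k$ does not depend on the sizes of the two input LTS. Under the standard convention that $\Act$ is part of the fixed signature, $k$ is a constant.

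With $k$ constant, the theorem gives an upper bound of $\mathcal{O}\bigl(n^2\cdot 2^{n^2 k^2}\bigr)= 2^{\mathcal{O}(n^2)}$ on the time needed to run \textsc{MC}($\Phi_{\mathcal{R}},[]$) on two LTS of size at most $n$. Since \textsc{MC} returns the full set of pairs satisfying $\Phi_{\mathcal{R}}$, checking whether a particular pair $(P,Q)$ belongs to this set (equivalently, whether $P\mathrel{\mathcal{R}}Q$ holds) is an $\mathcal{O}(1)$ post-processing step, so the overall bound $2^{\mathcal{O}(n^2)}$ is preserved for each of the six equivalences.

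The only genuinely non-routine point is the uniform bound on the formula sizes, and in particular making explicit that the dependence on $|\Act|$ is absorbed into the constant hidden by the $\mathcal{O}(\cdot)$: if one instead wants to track $|\Act|$ as a parameter, the bound refines to $2^{\mathcal{O}(n^2\cdot 2^{\mathcal{O}(|\Act|)})}$, but under the standing assumption that $\Act$ is fixed this collapses to $2^{\mathcal{O}(n^2)}$ as claimed. No further ingredient beyond the theorem and the formulas of Fig.~\ref{fig:belowsim} is required.
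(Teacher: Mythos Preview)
Your proposal is correct and follows exactly the route the paper intends: the corollary is stated immediately after the theorem with only the single remark ``This establishes exponential-time upper bounds for all the process equivalence relations which can be defined in $\hohdmu{1}{2}$'', so the paper's own ``proof'' is precisely the instantiation argument you spell out. Your treatment is in fact more explicit than the paper's, particularly in tracking the dependence of the formula size on $|\Act|$ and noting that this is absorbed into the constant since $\Act$ is fixed.
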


It is easily checked that for $\hohdmu{0}{2}$ formulas, algorithm \textsc{MC} runs in time $\mathcal{O}((kn^2)^k)$.
By instantiation we obtain polynomial-time algorithms for further process equivalences. 

\begin{corollary}
Completed, ready, 2-nested, bi- and simulation equivalence can be checked in polynomial time.
\end{corollary}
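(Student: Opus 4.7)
The plan is to verify that each of the five equivalences listed is characterised by a formula in the $\hohdmu{0}{2}$ fragment, and then apply the polynomial bound $\mathcal{O}((kn^2)^k)$ stated immediately above the corollary. Since $k$ is the size of the fixed defining formula and does not depend on the input LTS, the bound becomes a polynomial in $n$ for each individual equivalence.

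First I would go through the lower table in Fig.~\ref{fig:belowsim} and check that each instantiation of $\mathsf{TemplateSim}(\mathsf{Mod},\mathsf{Test})$ produces a well-typed closed $\hohdmu{0}{2}$ formula. The modalities $\may{a}{1}\must{a}{2}X$ and their symmetric variants have type $\iPR{2}$ and use the base type only. The bound variable $X$ is of type $\iPR{2}$, so $\mu X.\ldots$ stays at order $0$. The test predicates $\bot$, $\mathsf{deadlock}_1\nLeftrightarrow\mathsf{deadlock}_2$, and $\bigvee_{A\subseteq\Act}\mathsf{ready}_1(A)\nLeftrightarrow\mathsf{ready}_2(A)$ are modal Boolean combinations without fixpoints or abstractions, hence also order $0$.

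Next I would handle the one subtlety, $2$-nested simulation, whose $\mathsf{Test}$ component is $\swap{\Phi_{\mathsf{s}}}$, itself an $\hohdmu{0}{2}$ formula obtained from the template for simulation. Substituting it in preserves membership in $\hohdmu{0}{2}$ and merely nests two least fixpoints. The formula size is still independent of $n$ (polynomial in $|\Act|$), so the bound from the preceding observation yields polynomial running time; the nesting depth is absorbed into the exponent $k$, which is constant for a fixed equivalence.

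The step that warrants the most care is confirming that the $\mathcal{O}((kn^2)^k)$ estimate really covers nested least fixpoints, since each iteration of the outer fixpoint recomputes the inner one. Because $k$ bounds the number of fixpoint binders, and each binder iterates at most $n^2$ times over a sublattice of $2^{\Procs_1\times\Procs_2}$ by monotonicity, the overall work is at most $(n^2)^k$ many basic set operations on sets of pairs, each of cost $\mathcal{O}(n^2)$, matching the stated bound. From this, the corollary follows by direct instantiation for each of the five equivalences. The main obstacle, rather than any genuine difficulty, is just bookkeeping: making sure the constants hidden in the alphabet size $|\Act|$ and in the depth of fixpoint nesting of the $2$-nested simulation formula are treated as formula parameters rather than as input size.
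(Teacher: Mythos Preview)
Your proposal is correct and follows the same route as the paper: observe that all five equivalences are defined by $\hohdmu{0}{2}$ formulas via $\mathsf{TemplateSim}$, then invoke the $\mathcal{O}((kn^2)^k)$ bound for the order-$0$ fragment with $k$ fixed. The paper's own justification is in fact the single sentence preceding the corollary, so your write-up is considerably more explicit than what the authors supply; your handling of the nested fixpoint in the $2$-nested simulation case and the bookkeeping remark about $|\Act|$ being absorbed into $k$ are exactly the details one would want spelled out (note, incidentally, that the ready-simulation test has $2^{|\Act|}$ disjuncts rather than polynomially many, but since $\Act$ is fixed this does not affect the argument).
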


We point out that algorithm \textsc{MC} can be made to work symbolically on BDDs just like 
the algorithm for the $\mu$-calculus can \cite{IC::BurchCMDH1992}. A function is then represented as a table
of BDDs. Furthermore, it can straight-forwardly be extended to higher orders which increases the complexity
by one exponential per order. As a result, we obtain the following.

\begin{proposition}
Possible-futures equivalence can be checked in doubly exponential time.
\end{proposition}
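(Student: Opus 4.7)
The plan is to instantiate a suitably extended version of algorithm \textsc{MC} on the $\hohdmu{2}{2}$ formula for possible-futures equivalence given in Section~\ref{sec:procequivformula}, and to show that this extension runs in doubly exponential time on any $\hohdmu{2}{2}$ input. Since possible-futures equivalence is defined by a fixed closed formula of that fragment, the bound will transfer to equivalence checking.

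First, I would extend algorithm \textsc{MC} so that the cases for $\lambda$-abstraction, application and $\mu$-binding allow bound variables to range over an order-$1$ type $\tau = \iPR{2}^{v_1} \to \ldots \to \iPR{2}^{v_m} \to \iPR{2}$. In those cases, instead of iterating over all tuples in $(2^{\Procs_1 \times \Procs_2})^m$, one iterates over all tuples in $\sem{\tau}^m$, i.e.\ over all tables of subsets of $\Procs_1 \times \Procs_2$. An order-$2$ function is then represented by a table whose keys are such order-$1$ tables and whose values are again subsets of $\Procs_1\times\Procs_2$ (or tables thereof, in the general order-$2$ case). The well-definedness of the extension follows from the finiteness of every $\sem{\tau}$ and a straight-forward induction on the typing derivation, precisely as in the proof sketch of Theorem~1.

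Second, I would bound the running time. With $n = \max(|\Procs_1|,|\Procs_2|)$ and $k$ the size of the formula, $\sem{\iPR{2}}$ has $2^{n^2}$ elements, so an order-$1$ type $\tau$ of arity at most $k$ satisfies $|\sem{\tau}| \le (2^{n^2})^{(2^{n^2})^k} = 2^{2^{\mathcal{O}(n^2 k)}}$, and the lattice $\sem{\tau}$ has chain length at most $n^2 \cdot 2^{n^2 k} = 2^{\mathcal{O}(n^2 k)}$. A table representing an order-$2$ function of arity at most $k$ therefore has at most $2^{2^{\mathcal{O}(n^2 k)}}$ entries; each entry, a singly-exponential order-$1$ table, can be read, written, and compared in time $2^{\mathcal{O}(n^2 k)}$. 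By monotonicity, any $\mu$-iteration at order $2$ converges after at most the chain length of $\sem{\tau}$, which is again at most $2^{2^{\mathcal{O}(n^2 k)}}$ steps. Since the formula for possible-futures equivalence is of constant size, this yields a total running time of $2^{2^{\mathcal{O}(n^2)}}$, as required.

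The main obstacle I expect is purely a bookkeeping one: verifying that each elementary operation on an order-$2$ table (lookup, update, equality test of two order-$1$ entries, and detection of fixpoint convergence) can indeed be carried out within the claimed singly-exponential cost, so that the outer doubly-exponential iteration is not multiplied by a further exponential. Once this is verified, the pattern already visible in the proof of Theorem~1, namely that each additional order contributes one extra exponential to both the table size and the iteration count, gives the doubly exponential bound immediately and in a uniform way. The argument also generalises: an $\hohdmu{o}{2}$ formula would, by the same reasoning, be model-checkable in $o$-fold exponential time.
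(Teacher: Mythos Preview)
Your proposal is correct and follows essentially the same approach as the paper: the paper simply remarks that algorithm \textsc{MC} ``can straight-forwardly be extended to higher orders which increases the complexity by one exponential per order'', and then states the proposition as an immediate consequence of the $\hohdmu{2}{2}$ formula for possible-futures equivalence. You have supplied precisely the bookkeeping that the paper leaves implicit---the size of order-$1$ tables, the number of entries in an order-$2$ table, and the chain-length bound governing fixpoint convergence---so your argument is a faithful elaboration of the intended proof rather than an alternative route.
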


\subsection{Need-Driven Function Evaluation}

Algorithm \textsc{MC} computes values for functions in a very na\"{\i}ve and brute-force way: it tabulates
all possible arguments to the function and computes all their values. This results in far too many 
value computations than are needed in order to compute $\sem{\judg{\emptyset}{\Phi:\iPR{2}}}$ for any
closed formula $\Phi$. Consider for example $(\lambda X^+:\iPR{2}. \must{a}{2}X)\ \bot$. Its semantics is
the set of all pairs $(P,Q)$ such that $Q$ has no $a$-successors. However, algorithm \textsc{MC} would compute the 
set of all pairs $(P,Q)$ such that all $a$-successors of $Q$ belong to the second components of any set 
of pairs $(P,Q')$. 

Need-driven function evaluation avoids these unnecessary computations. For formulas without fixpoint quantifiers
it could easily be realised by evaluating arguments first, and then passing these values to the computation
of the function, comparable to lazy evaluation in functional programming. Need-driven function evaluation in
the presence of fixpoint quantifiers is more complicated, though \cite{Jorgensen94-0:confs}. For recursively defined functions it is not
sufficient to simply compute their value on a given argument using fixpoint iteration for instance, but the
computation of the value on some argument may need the value on some other argument. Need-driven function
evaluation intertwines the computation of these values with the exploration of the function's domain \cite{al-lpar07}. 
The following example shows the optimising potential of this technique.

\begin{figure}
\begin{center}
\raisebox{15mm}{\scalebox{0.7}{\begin{tikzpicture}[thick, node distance=1.7cm, initial text={}, every state/.style={minimum size=4mm},]
  \node[state,initial] (q0)               {0};
  \node[state]         (q1) [right of=q0] {1};

  \path[->] (q0) edge [bend left]  node [above] {$b$} (q1)
            (q1) edge [bend left]  node [below] {$b$} (q0)
                 edge [loop right] node [right] {$a$} ();

  \node[state,initial] (p0) [right of=q1,node distance=2.5cm] {2};
  \node[state]         (p1) [right of=p0]                     {3};
  \node[state]         (p2) [below of=p1]                     {4};

  \path[->] (p0) edge [bend left]  node [above]      {$b$} (p1)
            (p1) edge [bend left]  node [right]      {$a$} (p2)
                 edge [bend left]  node [below]      {$b$} (p0)
            (p2) edge [bend left]  node [below left] {$b$} (p0)
                 edge [bend left]  node [left]       {$a$} (p1);
\end{tikzpicture}}}
\hspace*{2cm}
\scalebox{0.7}{\begin{tikzpicture}[dashed, node distance=3cm, initial text={}, every state/.style={minimum size=4mm},]
  \node (A)               {$\begin{array}{c}S_1 \times S_2 \\ \emptyset \end{array}$};
  \node (B) [right of=A]  {$\begin{array}{c}\{1\} \times S_2 \\ S_1 \times \{2\}\end{array}$};
  \node (C) [below of=B]  {$\begin{array}{c}\{0\} \times S_2 \\ S_1 \times \{3,4\}\end{array}$};
  \node (D) [below of=A]  {$\begin{array}{c}\emptyset \\ S_1 \times S_2\end{array}$};

  \path[->] (A) edge              node [above] {\first} (B)
                edge [loop above] node [above] {\second} ()
            (B) edge [loop above] node [above] {\first} ()
                edge [bend left]  node [right] {\second} (C)
            (C) edge              node [below] {\first} (D)
                edge [bend left]  node [left]  {\second} (B)
            (D) edge [loop above] node [above] {\first,\second} ();
\end{tikzpicture}}
\vskip4mm

\begin{math}
\begin{array}{c@{\enspace}|c@{\enspace}c@{\enspace}c@{\enspace}c}
X & S_1 {\times} S_2 & \{1\}{\times}S_2 & \{0\}{\times}S_2 & \emptyset \\
Y & \emptyset & S_1{\times}\{2\} & S_1{\times}\{3,4\} & S_1{\times}S_2 \\
\hline
\mathcal{F}^0 & \emptyset & \emptyset & \emptyset & \emptyset \\
\mathcal{F}^1 & \emptyset & \{(1,2)\} & \{(0,3),(0,4)\} & \emptyset \\
\mathcal{F}^2 & \{(1,2)\} & \{(1,2),(0,3),(0,4)\} & \{(1,2),(0,3),(0,4)\} & \emptyset \\
\mathcal{F}^3 & \{(1,2),(0,3),(0,4)\} & \{(1,2),(0,3),(0,4)\} & \{(1,2),(0,3),(0,4)\} & \emptyset \\
\mathcal{F}^4 & \{(1,2),(0,3),(0,4)\} & \{(1,2),(0,3),(0,4)\} & \{(1,2),(0,3),(0,4)\} & \emptyset
\end{array}
\end{math}
\end{center}

\caption{Example of need-driven function evaluation for trace equivalence checking.}
\label{fig:example}
\end{figure}

\begin{example}
Consider the two LTS presented in Fig.~\ref{fig:example}. Let $S_1 \eqdef \{0,1\}$ and 
$S_2 \eqdef \{2,3,4\}$ be their state spaces. We will show how need-driven function evaluation works on algorithm \textsc{MC}, 
these two LTS and the formula that tests for trace equivalence over $\Act = \{a,b\}$, namely
\begin{displaymath}
\Phi_{\mathsf{t}} \enspace = \enspace \big(\mu F(X,Y).\ (X\wedge Y)\ \vee\ (F\ \may{a}{1}X \ \must{a}{2}Y) \ \vee \ 
(F\ \may{b}{1}X\ \must{b}{2}Y)\big)\enspace \top \enspace \bot\enspace .
\end{displaymath}
Note that it should be true on a pair $(P,Q)$ of processes iff $P$ has a trace that $Q$ does not. 

$\Phi_{\mathsf{t}}$ defines a function $\mathcal{F}$ via least-fixpoint recursion. It takes two arguments $X$ and $Y$ and
returns the union of their intersection with the value of $\mathcal{F}$ applied to two other sets of arguments,
defined by $\may{a}{1}X$ and $\must{a}{2}Y$ in one case and equally with $b$ in the other. Moreover, we are interested
in the value of $\mathcal{F}$ on the argument pair $(S_1\times S_2,\emptyset)$. 

Need-driven function evaluation builds the table for $\mathcal{F}$ via fixpoint iteration, i.e.\ by building its 
approximants $\mathcal{F}^0$, $\mathcal{F}^1$, \ldots with $\mathcal{F}^0(X,Y) = \emptyset$ for any 
$X,Y \subseteq S_1\times S_2$, starting with the argument on which we need the function's value. Since
$\mathcal{F}$ is recursively defined, the value on this argument may need the value on other arguments. 
Fig.~\ref{fig:example} shows the part of the dependency graph that is reachable from this initial argument, where
an arrow $(X,Y) \stackrel{a}{\dasharrow} (X',Y')$ states that the computation of the value on $(X,Y)$ triggers the
first recursive call on $(X',Y')$. Similarly, an arrow $\stackrel{b}{\dasharrow}$ shows the dependency via the second
recursive call.

Finally, Fig.~\ref{fig:example} shows the table of values computed by fixpoint iteration restricted to those 
arguments that occur in the dependency graph, i.e.\ the part of the function's domain which is necessary to iterate
on until stability in order to determine the fixpoint's value on the initial argument. The optimising potential of
need-driven function evaluation is justified by the table's width: note that the na\"{\i}ve version of algorithm
\textsc{MC} would fill that table for all possible arguments of which there are $(2^{2\cdot 3})^2 = 4096$ while it
suffices to reach stability on these $4$ arguments alone.
\end{example}

\subsection{Partial Evaluation}

The example above shows another potential for optimisation. Remember that the formals defining process equivalences
do not depend on the actual LTS on which they are being evaluated. Thus, we can devise a simpler algorithm for trace
equivalence for instance by analysing the behaviour of \textsc{MC} on an arbitrary pair of LTS and the fixed formula
$\Phi_{\mathsf{t}}$. We note that the filling of the table values follows a simple scheme: the value in row $i$ at
position $(X,Y)$ is the union of three values, namely the one in row $1$ of this position and the values in row
$i-1$ of the two successors of $(X,Y)$ in the dependency graph. This leads to the simple Algorithm~\ref{alg:treq} for
trace equivalence checking.
  
\begin{algorithm}[t]
\caption{Trace Equivalence Checking}
\label{alg:treq}
\begin{algorithmic}[1]
\Procedure{\textsc{TrEq}}{$\Transsys_1,\Transsys_2$} \Comment{let $\Transsys_i = (\Procs_i,\Act,\rightarrow_i)$}
\State $X_0 \gets \Procs_1\times\Procs_2$
\State $Y_0 \gets \emptyset$
\State $\mathcal{W} = \{ (X_0,Y_0) \}$ \Comment{work list}
\State $\mathcal{D} = \emptyset$ \Comment{domain of the dependency graph}
\While{$\mathcal{W} \ne \emptyset$} \Comment{build dependency graph}
  \State remove some $(X,Y)$ from $\mathcal{W}$
  \ForAll{$a \in \Act$}
    \State $(X',Y') \gets (\may{a}{1}X,\must{a}{2}Y)$ 
    \State $d_a(X,Y) \gets (X',Y')$ \Comment{record arrows in dependency graph}
    \State $\mathcal{D}�\gets \mathcal{D} \cup \{(X,Y)\}$
    \If{$(X',Y') \not\in \mathcal{D}$}
      \State $\mathcal{W} \gets \mathcal{W} \cup \{(X',Y')\}$
    \EndIf
  \EndFor
\EndWhile
\ForAll{$(X,Y) \in \mathcal{D}$}
  \State $I(X,Y) \gets X \cap Y$
  \State $F(X,Y) \gets \emptyset$
\EndFor
\Repeat
  \ForAll{$(X,Y) \in \mathcal{D}$}
    \State $F(X,Y) \gets I(X,Y) \cup \bigcup_{a \in \Act} F(d_a(X,Y))$
  \EndFor
\Until{$F$ does not change anymore}
\State \textbf{return} $F(X_0,Y_0)$
\EndProcedure
\end{algorithmic}
\end{algorithm}


\section{Conclusion and Further Work}
\label{sec:concl}
We have presented a highly expressive modal fixpoint logic which can define many process equivalence
relations. We have presented a model checking algorithm which can be instantiated in order to yield
decision procedures for the relations on finite systems. This re-establishes already known decidability 
results \cite{ncstrl.albany_cs//SUNYA-CS-96-03}. Its main contribution, though, is the --- to the best
of our knowledge --- first framework that provides a generic and uniform algorithmic approach to process 
equivalence checking via defining formulas. In particular, it allows technology from the well-developed field of model checking 
to be transferred to process equivalence checking. 

There is a lot of potential further work into this direction. The exponential-time bound for the
trace-like equivalences is not optimal since they are generally PSPACE-complete
\cite{ncstrl.albany_cs//SUNYA-CS-96-03}.  It remains to be seen whether the formulas defining them have
a particular structure that would allow a PSPACE model checking algorithm for instance. This would make a
real improvement since model checking $\hohdmu{1}{2}$ is EXPTIME-hard in general which follows from such
a bound for the first-order fragment of HFL \cite{als-mchfl07}. Also, it
remains to be seen whether or not possible-futures equivalence can be defined $\hohdmu{1}{2}$.

We leave the exact formulation of a model checking procedure for the entire logic $\HOHDMU$ for future work.
Such an algorithm may be interesting for other fields as well, not just process equivalence checking.

There are more equivalence relations which we have not considered here for lack of space, e.g.\ possible-worlds
equivalence, tree equivalence, 2-bounded trace bisimulation, etc. We believe that creating defining formulas
for them in $\HOHDMU$ is of no particular difficulty. 

We intend to also investigate the practicability of this approach. To this end, we aim to extend an existing
prototypical implementation of a symbolic model checking tool for the higher-dimension $\mu$-calculus to 
$\hohdmu{1}{2}$, and possible $\HOHDMU$ in general. We believe that using need-driven function evaluation and
partial evaluation techniques will have a major influence on the applicability of the algorithms obtained by 
instantiating the generic model checking procedure with a fixed formula.


\bibliographystyle{eptcs}
\bibliography{biblio}

\appendix
\section{Definitions of Process Equivalences}
\label{sec:app}
\parindent0pt

\textbf{Finite Trace Equivalence.}
Let $T(P) := \{t \mid \exists Q. P \tr{t} Q\}$ be the set of all finite traces of $P$. Two processes $P$ and $Q$ are 
\emph{finite trace equivalent}, $P \sim_{\mathsf{t}} Q$, if $T(P) = T(Q)$.

\textbf{Completed Trace Equivalence.}
A sequence $t \in \Act^*$ of a process $P$ is a \emph{completed trace} if there is a $Q$ s.t. $P \tr{t} Q$ and 
$I(Q) = \emptyset$. Let $CT(P)$ be the set of all completed traces of $P$. Two processes $P$ and $Q$ are \emph{completed trace equivalent},
$P \sim_{\mathsf{ct}} Q$, if $T(P) = T(Q)$ and $CT(P) = CT(Q)$.

\textbf{Failures Equivalence.}
A pair $\langle t, A \rangle$ is a \emph{failure pair} of $P$ if there is a process $Q$ s.t. $P \tr{t} Q$ and 
$I(Q) \cap A = \emptyset$. Let $F(P)$ denote the set of all failure pairs of $P$. Two processes $P$ and $Q$ are \emph{failures equivalent}, 
$P \sim_{\mathsf{f}} Q$, if $F(P) = F(Q)$.

\textbf{Failure Trace Equivalence.}  A \emph{failure trace} is a $u \in (\Act \cup
2^{\Act})^*$. We extend the reachability relation of processes to failure traces by including 
$P \tr{\varepsilon}_{\mathsf{ft}} P$ for any $P$ and the triples $P \tr{A}_{\mathsf{ft}} Q$
whenever $I(P) \cap A = \emptyset$, and then closing it off under compositions: if 
$P\tr{u}_{\mathsf{ft}} R$ and $R \tr{u'}_{\mathsf{ft}} Q$ then $P \tr{uu'}_{\mathsf{ft}} Q$.  
Let $FT(P) := \{ u \mid \exists Q. P \tr{u}_{\mathsf{ft}} Q \}$ 
be the set of all failure traces of $P$. Two
processes $P$ and $Q$ are \emph{failure trace equivalent}, $P \sim_{\mathsf{ft}} Q$, if
$FT(P) = FT(Q)$.

\textbf{Readiness Equivalence.}
A pair $\langle t, A \rangle$ is a \emph{ready pair} of $P$ if there is a process $Q$ s.t. $P \tr{t} Q$ and $A = I(Q)$. Let 
$R(P)$ denote the set of all ready pairs of $P$. Two processes $P$ and $Q$ are \emph{ready equivalent}, $P \sim_{\mathsf{r}} Q$, if $R(P) = R(Q)$.

\textbf{Ready Trace Equivalence.}
A \emph{ready trace} is a $u \in (\Act \cup 2^{\Act})^*$. We extend the reachability relation of processes to ready traces by
including $P \tr{\varepsilon}_{\mathsf{rt}} P$, and $P \tr{A}_{\mathsf{rt}} Q$ whenever $I(P) = A$, and closing it off under
compositions as in the case of failure trace equivalence. Let $RT(P) := \{u \mid \exists Q. P \tr{u}_{\mathsf{rt}} Q \}$ 
be the set of all ready traces of $P$. Two processes $P$ and $Q$ are \emph{ready trace equivalent}, $P \sim_{\mathsf{rt}} Q$, if 
$RT(P) = RT(Q)$.

\textbf{Possible-Futures Equivalence.}
A pair $\langle t, L \rangle$ is a \emph{possible future} of $P$ if there is a process $Q$ s.t. $P \tr{t} Q$ and $L = T(Q)$. 
Let $PF(P)$ be the set of all possible futures of $P$. Two processes $P$ and $Q$ are \emph{possible-futures equivalent}, $P \sim_{\mathsf{pf}} Q$, 
if $PF(P) = PF(Q)$.

\textbf{Simulation Equivalence.}
A binary relation $\mathcal{R}$ is a \emph{simulation} on processes if it satisfies for any $a \in \Act$:
if $(P,Q) \in \mathcal{R}$ and $P \tr{a} P'$, then $\exists Q'. Q \tr{a} Q'$ and 
$(P',Q') \in \mathcal{R}$. 
$P$ and $Q$ are \emph{similar}, $P \sim_{\mathsf{s}} Q$, if there are simulations $\mathcal{R}$ and $\mathcal{R'}$ s.t. $(P,Q) \in \mathcal{R}$ and 
$(Q,P) \in \mathcal{R'}$.

\textbf{Completed Simulation Equivalence.}
A binary relation $\mathcal{R}$ is a \emph{completed simulation} on processes if it satisfies for any $a \in \Act$:
if $(P,Q) \in \mathcal{R}$ and $P \tr{a} P'$, then $\exists Q'. Q \tr{a} Q'$ and 
$(P',Q') \in \mathcal{R}$.
And if $(P,Q) \in \mathcal{R}$ then $I(P) = \emptyset \Leftrightarrow I(Q) = \emptyset$.
Two processes $P$ and $Q$ are \emph{completed simulation equivalent}, $P \sim_{\mathsf{cs}} Q$, if there are completed simulations $\mathcal{R}$ and 
$\mathcal{R'}$ s.t. $(P,Q) \in \mathcal{R}$ and $(Q,P) \in \mathcal{R'}$.

\textbf{Ready Simulation Equivalence.}
A binary relation $\mathcal{R}$ is a \emph{ready simulation} on processes if it satisfies for any $a \in \Act$:
if $(P,Q) \in \mathcal{R}$ and $P \tr{a} P'$, then $\exists Q'. Q \tr{a} Q'$ and 
$(P',Q') \in \mathcal{R}$.
And if $(P,Q) \in \mathcal{R}$ then $I(P) = I(Q)$.
Two processes $P$ and $Q$ are \emph{ready simulation equivalent}, $P \sim_{\mathsf{rs}} Q$, if there are ready simulations $\mathcal{R}$ and
$\mathcal{R'}$ s.t. $(P,Q) \in \mathcal{R}$ and $(Q,P) \in \mathcal{R'}$.

\textbf{2-Nested Simulation Equivalence.} 
A binary relation $\mathcal{R}$ is a \emph{2-nested simulation} on processes if it satisfies for any $a \in \Act$:
if $(P,Q) \in \mathcal{R}$ and $P \tr{a} P'$, then $\exists Q'. Q \tr{a} Q'$ and 
$(P',Q') \in \mathcal{R}$.
And if $(P,Q) \in \mathcal{R}$ then $Q \sim_{\mathsf{s}} P$.
Two processes $P$ and $Q$ are \emph{2-nested simulation equivalent}, $P \sim_{\mathsf{2s}} Q$, if there are 2-nested simulations $\mathcal{R}$ and
$\mathcal{R'}$ s.t. $(P,Q) \in \mathcal{R}$ and $(Q,P) \in \mathcal{R'}$.

\textbf{Bisimulation.}
A binary relation $\mathcal{R}$ is a \emph{bisimulation} on processes if it satisfies for any $a \in \Act$:
if $(P,Q) \in \mathcal{R}$ and $P \tr{a} P'$, then $\exists Q'. Q \tr{a} Q'$ and 
$(P',Q') \in \mathcal{R}$. 
And if $(P,Q) \in \mathcal{R}$ and $Q \tr{a} Q'$, then $\exists P'. P \tr{a} P'$ and
$(P',Q') \in \mathcal{R}$.
Two processes $P$ and $Q$ are \emph{bisimilar}, $P \sim_{\mathsf{b}} Q$, if there is a bisimulation $\mathcal{R}$ s.t. $(P,Q) \in \mathcal{R}$.


\end{document}